\pgfplotsset{compat=newest}
\DeclareMathOperator*{\argmin}{argmin}
\DeclareRobustCommand\onedot{\futurelet\@let@token\@onedot}
\def\@onedot{\ifx\@let@token.\else.\null\fi\xspace}
\newtheorem{proposition}{Proposition}
\begin{document}

\title{\huge {\bf Trading Data for Wind Power Forecasting: A Regression Market with Lasso Regularization}}

\author{\Large Liyang Han$^{\dag *}$, Pierre Pinson$^\ddag$, Jalal Kazempour$^\dag$ \\ \textcolor{white}{.} \\ \small $^\dag$ Technical University of Denmark \\ \small Department of Wind and Energy Systems \\ \small $^*$ Corresponding Author Email: \href{mailto:liyha@dtu.dk}{liyha@dtu.dk} \\
\textcolor{white}{.} \\ \small $^\ddag$ Technical University of Denmark \\ \small Department of Technology, Management and Economics  \vspace{1cm}}

\date{\today}

\maketitle

\begin{abstract}
This paper proposes a regression market for wind agents to monetize data traded among themselves for wind power forecasting.  Existing literature on data markets often treats data disclosure as a binary choice or modulates the data quality based on the mismatch between the offer and bid prices. As a result, the market disadvantages either the data sellers due to the overestimation of their willingness to disclose data,  or the data buyers due to the lack of useful data being provided. Our proposed regression market determines the data payment based on the least absolute shrinkage and selection operator (lasso), which not only provides the data buyer with a means for selecting useful features, but also enables each data seller to individualize the threshold for data payment. Using both synthetic data and real-world wind data, the case studies demonstrate a reduction in the overall losses for wind agents who buy data, as well as additional financial benefits to those who sell data.

\vspace{1cm}

\noindent
{\bf Keywords:}
Data market, lasso, linear regression, time series, wind power forecasting

\end{abstract}

\newpage

\section{Introduction}
\label{sec:intro}

As the complexity and uncertainty of modern energy systems grow, the value of data for improving system and market operations has become a focus of academic research in recent years \cite{Veldkamp2020}.  One important usage of data is to inform the forecast of loads and generation \cite{Morales2014}.  Focusing on wind power forecasting, this paper investigates how the value of external data can be quantified and monetized.

The added value of data for wind power forecasting has been well examined under the assumption that data from external sources is a free and highly accessible commodity \cite{Jung2014, Tastu2010, Andrade2017}. However, this assumption becomes unrealistic in applications where data privacy is highly valued \cite{Farokhi2020} or where a conflict of interest exists between the data owners and the data users \cite{Goncalves2020}.  

To incentivize data exchange, recent literature has been exploring the idea of a data market, in which the data owners (sellers) receive financial compensation from the data users (buyers) for their disclosure of data \cite{Bergemann2019}. In general, these market frameworks can be put into two categories: buyer-centric markets and seller-centric markets. In a buyer-centric market,  the data buyer has full control over the data price, while each data seller is either assumed to always accept the offer \cite{Pinson2022}, or given a binary choice on the offer while suffering a privacy loss \cite{Montes2019, Acemoglu2021}. By contrast, a seller-centric market gives the data sellers the authority to directly add noise to the data \cite{Bimpikis2019}, or has a third party set the price  based on the added value of previously traded data \cite{Agarwal2019}. In the latter, noise is also added to the data if the offering price from the data buyer is lower than the set price. As these market frameworks are often based on game theory, it is often computationally intensive to find the market equilibrium in a noncooperative game setting \cite{Montes2019, Acemoglu2021, Bimpikis2019} or to derive the market payoffs in a cooperative game setting \cite{Agarwal2019}.

In the context of wind power forecasting, it has been shown in the literature that data from surrounding sites, such as other wind farms, can help improve forecast accuracy substantially due to the significant spatio-temporal correlations \cite{Tastu2010,  Girard2013}. A wind agent's losses from forecast errors are commonly measured by the mean squared error (MSE) of the forecast compared to the target values of power generation \cite{He2014, Pinson2016}. Using the reduction of MSE as the measurement for a wind agent's profit as a data buyer, \cite{Goncalves2020} and \cite{Pinson2022} adopt the ordinary least squares (OLS) regression to estimate the forecasting parameters, and construct data markets to incentivize wind agents as market competitors to trade data among themselves. The framework in \cite{Goncalves2020} adopts the pricing scheme from \cite{Agarwal2019}. where the price of data in each trade is determined by the added value of the sellers' data to the previous buyers. As a result, the market outcomes are dependent on the ordering of the buyers, leading to potential large suboptimality gaps and unfairness. Instead of pricing the data prior to each trade, the framework in \cite{Pinson2022} quantifies the reduction of the losses for each data buyer and allocates the resulting savings to the sellers based on their contribution to the buyer's task. The caveat of this market is the assumption that the sellers have no revenue requirements, when in reality this might not hold true.

OLS regression is known to be highly sensitive to outliers and prone to overfitting when the supporting features, sets of data that are fitted to the target variable in the regression, are highly correlated \cite{Boyd2010}. Since wind power data for closely located wind farms are likely highly correlated, feature selection methods such as the least absolute shrinkage and selection operator (lasso) \cite{Tibshirani1996} have been widely adopted for parameter estimation in wind power forecasting \cite{Pinson2016,  Cavalcante2017,  Messner2019}. It helps eliminate the impact of less correlated data in the forecasting task through adding to the loss function the product of a positive regularization parameter and the absolute value of the coefficient for each feature, referred to as the lasso term. 

We observe that not only is the lasso term effective in preventing overfitting, but it can also provide a tool for thresholding the selection of features through customizing their corresponding regularization parameters: the higher the regularization parameter, the more correlated a feature has to be to the target to be selected. This observation serves as the inspiration for our proposed market framework.

The main contribution of this paper is the design of a \emph{lasso regression market} for wind power forecasting, in which the lasso term is exploited as a measure for direct payments from data buyers to data sellers. This market framework has three main advantages: 1) compared to the seller-centric model, the lasso provides the buyers with better quality data through selecting from the sellers' data the features that are most likely to improve the forecast without additional noise; 2) compared to the buyer-centric model, sellers are given the authority to threshold their financial reward through individualizing the lasso regularization parameter; and 3) compared to data markets that are set up based on game theory, the computation of the market outcomes is simpler as the lasso term as an output from the regression directly represents the payment. 

The paper is structured as follows: Section~\ref{sec:setup} introduces the market participants. Section~\ref{sec:regframe} describes the OLS regression and the lasso regression for the data buyer's analytics task. Section~\ref{sec:lasso} formulates the lasso regression market for wind agents and proves its financial viability for both data buyers and data sellers. Section~\ref{sec:cases} uses synthetic data and measured data from Nord Pool\footnote{https://www.nordpoolgroup.com/Market-data1/Power-system-data/Production1/Wind-Power/ALL/Hourly1/} to demonstrate the effectiveness of the proposed market. Section~\ref{sec:conclusion} concludes the paper with discussions on the key findings and future work.

\section{Market Participants}
\label{sec:setup}

The use case for the proposed regression market is data trading among wind agents. The data buyer is called a \emph{central agent} who has an analytics task to estimate parameters for forecasting wind power. The other wind agents are data sellers, referred to as \emph{support agents},  hold data that could potentially improve the central agent's forecast, and expect to be remunerated for sharing those data.  We focus on a single-buyer setup, but it can be easily extended to multiple buyers since the agents' analytics tasks are independent of each other in the sense that the outcome of a central agent's forecasting task does not affect any concurrent task of another agent.

\subsection{Agents and the Analytics Task}

We denote the full set of wind agents by $\mathcal{N} = \{1, 2, \ldots, N\}$, indexed by $i$, so $i \in \mathcal{N}$. The data buyer, or the central agent $c \in \mathcal{N}$, has an analytics task to estimate parameters for forecasting wind power as a time series target variable, denoted by vector $\mathbf{y}_c = \left[y_{c, 1} \ y_{c, 2} \ \ldots \ y_{c,T} \right]^\top$, where $T$ is the total number of time steps of concern. The support agents are gathered in set $\mathcal{N}_{- c} = \mathcal{N} \setminus \{c\} = \{i \in \mathcal{N} | i \neq c\}$, and the data owned by each agent $i$ are in set $\mathcal{X}_i = \{\mathbf{x}_i^d | d \in [1, D_{i \rightarrow c}]\}$, where $D_{i \rightarrow c}$ is the total number of relevant features from agent $i$ for the central agent's analytics task, and vector $\mathbf{x}_i^d$ represents agent $i$'s $d$th feature (e.g., $l$-hour-ahead wind power data: $\mathbf{x}_i^d = [x_{i,1-l}^d \ x_{i,2-l}^d \ \ldots \ x_{i,T-l}^d]^\top$). The analytics task with all agents' data is given by
\begin{equation}
   \hat{\mathcal{B}}_\mathcal{N} = F \left(\mathcal{X}_\mathcal{N},\mathbf{y}_c \right) ,
   \label{eq:beta_all}
\end{equation}
where $\mathcal{X}_\mathcal{N} = \cup_{i \in \mathcal{N}} \mathcal{X}_i$ gathers the relevant input features of the central agent and support agents, and $\hat{\mathcal{B}}_\mathcal{N}$ is the set of estimated parameter values for the analytics task. If the analytics task follows a linear regression framework, $\hat{\mathcal{B}}_\mathcal{N} = \{\beta_i^d | i \in \mathcal{N}, d = 1, \ldots,D_{i \rightarrow c}\} \cup \{\beta_\mathcal{N}^0\}$ is the set of linear coefficients, with each element corresponding to a feature of an agent and $\beta_\mathcal{N}^0$ representing the intercept term. 

Similarly, if the central agent only relies on their own data for the analytics task, it solves
\begin{equation}
   \hat{\mathcal{B}}_c = F \left(\mathcal{X}_c,\mathbf{y}_c \right) ,
   \label{eq:beta_self}
\end{equation}
where $\hat{\mathcal{B}}_c = \{\beta_c^d | d = 0, 1, \ldots,D_{c \rightarrow c}\}$ gathers the linear coefficients corresponding to the features owned by the central agent, with $\beta_c^0$ being the intercept. 
Using results from \eqref{eq:beta_all} and \eqref{eq:beta_self}, the central agent can forecast the target variable as $\hat{\mathbf{y}}_c \left(\mathcal{X}_\mathcal{N}, \hat{\mathcal{B}}_\mathcal{N} \right)$ and $\hat{\mathbf{y}}_c \left(\mathcal{X}_c, \hat{\mathcal{B}}_c \right)$, respectively. Details on linear regression for the analytics task and forecasting are explained in Section \ref{sec:regframe}.

\subsection{Support Agents and Reservation to Sell Data} \label{subs:support_unwill}

Note that the support agents and the central agent can be competitors in the same energy market, which may discourage the support agents from disclosing their data. We measure this barrier in the form of a payment threshold denoted by 
\begin{equation}
H_i^d \left(u_i^d,  \beta_i^d \right) = \left| u_i^d \beta_i^d \right| ,
\end{equation}
representing the payment required by support agent $i \in \mathcal{N}_{-c}$ for disclosing data associated with their $d$th feature. It is a function of $u_i^d$ and $\beta_i^d$, measuring, respectively, the reservation of agent $i$ to sell their $d$th feature, and how correlated this feature is to the central agent's target variable. Therefore, the higher $u_i^d$ and $\beta_i^d$ are, the higher the payment needs to be for agent $i$ to disclose their $d$th feature to the central agent. In other words, the payment threshold increases as the support agent becomes less willing to sell their data, and as the correlation between the support agent's data and the central agent's target variable becomes stronger.

Some factors that a support agent may take into account when determining their revenue threshold includes the valuation of their loss of privacy, the valuation of their losses due to an increase in their competitor's profit, the cost of collecting data and offering into the regression market, etc. In our proposed framework, the revenue threshold is non-negative. In practice, however, there could be scenarios where the support agent is able to benefit directly from the improvement of the central agent's analytics task (e.g., an overall improvement of the forecast of renewable generation in the energy market can lead to a decrease in the imbalance price that is eventually passed down to all the renewable agents), and is thus willing to receive a negative payment in the regression market. We choose to leave the latter scenario for future work.

\subsection{Central Agent and Willingness to Buy Data} \label{subs:cen_will}
We assume that the central agent uses the MSE to measure the average losses from the forecast for each time step:
\begin{equation} \label{eq:central_will}
S_c^\text{MSE} \left(\hat{\mathbf{y}}_c\right) =  \frac{1}{T} \sum_{t=1}^T \left(y_{c,t} - \hat{y}_{c,t} \right)^2 ,
\end{equation}
where $\hat{\mathbf{y}}_c = \left[\hat{y}_{c,1} \ \hat{y}_{c,2} \ \ldots \ \hat{y}_{c,T} \right]^\top$.  In practice, the central agent may assign a scaling factor to the MSE to represent their actual losses in monetary terms, but we assume this scaling factor to be ``$1$'' in this paper for simplicity. Therefore, $S_c^\text{MSE}$ can be considered to directly represent the central agent's mean financial losses at each time step.

Given the payment threshold $H_i^d$ to obtain agent $i$'s $d$th feature, in order for the central agent to financially benefit from purchasing data in the regression market, it requires
\begin{equation} 
S_c^\text{MSE} \left( \hat{\mathbf{y}}_c \left(\mathcal{X}_c, \hat{\mathcal{B}}_c \right) \right) - S_c^\text{MSE} \left( \hat{\mathbf{y}}_c \left(\mathcal{X}_\mathcal{N}, \hat{\mathcal{B}}_\mathcal{N} \right) \right) \geq \sum_{i\in \mathcal{N}_{- c}} \sum_{d=1}^{D_{i \rightarrow c}} H_i^d .
\label{eq:central_IR}
\end{equation}

\section{Analytics Task under Linear Regression}
\label{sec:regframe}

Given \eqref{eq:central_IR}, we anticipate there to be an opportunity for the central agent to purchase data from the support agents while making sure the total payment is lower than the reduced losses for the forecast. In this section, we define the analytics tasks in different forms of linear regression for both the case with only the central agent's own data and the case with additional data from the support agents.

\subsection{Forecast with Linear Coefficients}
\label{ssec:linreg}

In the case where the central agent $c$ only considers their own features, the forecast of their target based on the coefficients from a linear regression can be written as
\begin{equation} \label{eq:lincentral}
    \hat{y}_{c,t} = \beta_c^0 + \sum_{d=1}^{D_{c \rightarrow c}} \beta_c^d x_{c,t}^d  \ ,
\end{equation}
In contrast, if the features of the support agents were also considered, the forecast would become
\begin{align} 
    \hat{y}_{c,t} & = \underbrace{\beta_c^0 + \sum_{d=1}^{D_{c \rightarrow c}} \beta_c^d x_{c,t}^d}_{\text{central agent}} \ + \underbrace{\sum_{i \in \mathcal{N}_{- c}} \left( \beta_i^0 + \sum_{d=1}^{D_{i \rightarrow c}} \beta_i^d x_{i, t}^d \right)}_{\text{support agents}} \nonumber \\
    & = \beta_\mathcal{N}^0 + \sum_{i \in \mathcal{N}} \sum_{d=1}^{D_{i \rightarrow c}} \beta_i^d x_{i, t}^d  \ , \label{eq:linall}
\end{align}
where $\beta_\mathcal{N}^0 = \sum_{i \in \mathcal{N}} \beta_{i,0}$ is the sum of the intercept terms of all agents, and thus the overall intercept. 

In the rest of the paper we assume that the dependencies of the target variable on the features are stationary, and thus the true linear coefficients do not vary with time. For ease of notation, we define vector $\mathbf{x}_{i,t} = [x_{i,t}^1 \ x_{i,t}^2 \ \ldots \ x_{i,t}^{D_{i \rightarrow c}}]^\top, \forall i \in \mathcal{N}$ as the values of all of agent $i$'s features used for the forecast of the central agent's target variable at time step $t$.

\subsection{Baseline Losses using OLS Regression on Own Features}
\label{ssec:ols_reg}

Without a data market, the central agent can only rely on their own data for the analytics task and forecasting. We construct the central agent's own regressor vector $\mathbf{x}_{\{c\}, t} = [1 \ \mathbf{x}_{c,t}^\top]^\top$, and use $\boldsymbol{\beta}_c^\ast = [{\beta_c^0}^\ast \ {\beta_c^1}^\ast \ \ldots \ {\beta_c^{D_{c \rightarrow c}}}^\ast]^\top$ to denote the vector of linear coefficients for the regressors. Using OLS regression, the analytics task in \eqref{eq:beta_self} becomes
\begin{align} 
    \boldsymbol{\beta}_c^\ast & =  \argmin_{\boldsymbol{\beta} \in \mathbb{R}^{1+D_{c \rightarrow c}}} \ \sum_{t=1}^T \left(y_{c,t} - \hat{y}_{c,t}\right)^2 \label{eq:ols_self} \\
    & \stackrel{\eqref{eq:lincentral}}{=}  \argmin_{\boldsymbol{\beta} \in \mathbb{R}^{1+D_{c \rightarrow c}}} \ \sum_{t=1}^T \left(y_{c,t} - \mathbf{x}_{\{c\},t}^\top \boldsymbol{\beta}\right)^2  .\label{eq:lscentral}
\end{align}

Comparing \eqref{eq:ols_self} and \eqref{eq:central_will}, since the constant $\frac{1}{T}$ does not affect the ``$\argmin$'' function, we can rewrite \eqref{eq:ols_self} as
\begin{equation} \label{eq:ols_mse}
     \boldsymbol{\beta}_c^\ast = \argmin_{\boldsymbol{\beta} \in \mathbb{R}^{1+D_{c \rightarrow c}}} S_c^\text{MSE} \left(\hat{\mathbf{y}}_c \left(\mathcal{X}_c, \boldsymbol{\beta}\right) \right) .
\end{equation}

Since the MSE measures the central agent's financial losses from the forecast (Section \ref{subs:cen_will}), we can consider $S_c^\text{MSE} \left(\hat{\mathbf{y}}_c \left(\mathcal{X}_c, \boldsymbol{\beta}_c^\ast \right) \right)$ the baseline losses of the central agent without access to data from other agents.

\subsection{Losses using Lasso Regression with Support Features} 
\label{ssec:lasso_reg}

When data from the support agents are made available to the central agent, there is a potential for the central agent to improve their forecasting accuracy. An OLS regression can again be applied to fit the available features, but a well known drawback of OLS regression is its sensitivity to outliers and highly correlated features,  which tends to result in overfitting the data and compromising the forecast accuracy \cite{Boyd2010}. Additionally, if the data collected from the support agents are of low quality (e.g., containing missing values, voluntarily flawed, etc.), OLS regression is not capable of eliminating these corrupted features. One way to mitigate such risks is to apply cross-validation of the support features, but it can be computationally expensive and complicates the market set up (i.e., how to reward the support agents for the portion of their data that are only used for cross-validation). Another way is to use feature selection methods, including $L_p$ regularization methods that have been proposed to reduce the impact of less correlated features on the forecast performance. Among these, the lasso is a popular regularization method that yields sparse coefficient matrices, which helps prevent overfitting \cite{Tibshirani1996}. The lasso term is an $L_1$-norm penalty applied to the coefficients of a linear regression problem.  

To implement the lasso regression on the features of all agents, we construct the all-agents regressor vector $\mathbf{x}_{\mathcal{N},t} = [1 \ \mathbf{x}_{1,t}^\top \ \ldots \mathbf{x}_{N,t}^\top]^\top$. We then denote the vector of the corresponding linear coefficients using lasso regression by $\boldsymbol{\beta}_\mathcal{N}^{L_1} = [{\beta_\mathcal{N}^0}^{L_1} \ {\boldsymbol{\beta}_{\mathcal{N},1}^{L_1}}^\top \ \ldots \ {\boldsymbol{\beta}_{\mathcal{N},N}^{L_1}}^\top]^\top$, where $\boldsymbol{\beta}_{\mathcal{N},i}^{L_1} = [{\beta_i^1}^{L_1} \ {\beta_i^2}^{L_1} \ \ldots \ {\beta_i^{D_{i \rightarrow c}}}^{L_1}]^\top, \forall i \in \mathcal{N}$. Applying the generic lasso estimator, the analytics task in \eqref{eq:beta_all} becomes
\begin{equation} \label{eq:lasso_all}
    \boldsymbol{\beta}_\mathcal{N}^{L_1} = \argmin_{\boldsymbol{\beta} \in \mathbb{R}^{1+\sum_{i \in \mathcal{N}} D_{i \rightarrow c}}} \  \left[ \frac{1}{2} \sum_{t=1}^T \left(y_{c,t} - \mathbf{x}_{\mathcal{N},t}^\top \boldsymbol{\beta} \right)^2 + \lambda \left\| \boldsymbol{\beta} \right\|_1 \right] ,
\end{equation}
where $\lambda$ is the lasso regularization parameter. The lasso term $\lambda \left\| \boldsymbol{\beta} \right\|_1$ shrinks some coefficients and sets some of them to zero. As a result, it reduces or even eliminates the influence of the features that are less likely to contribute to the improvement of the forecast. The greater $\lambda$ is, the more shrinkage is applied to the regression coefficients.  

We apply a constant factor of $\frac{2}{T}$ within the ``$\argmin$'' function in \eqref{eq:lasso_all} and rewrite it based on \eqref{eq:linall} and \eqref{eq:central_will} as
\begin{equation} \label{eq:lasso_all_mse}
    \boldsymbol{\beta}_\mathcal{N}^{L_1} = \argmin_{\boldsymbol{\beta} \in \mathbb{R}^{1+\sum_{i \in \mathcal{N}} D_{i \rightarrow c}}} \  \left[ S_c^\text{MSE} \left(\hat{\mathbf{y}}_c \left(\mathcal{X}_\mathcal{N}, \boldsymbol{\beta} \right) \right) + \frac{2 \lambda}{T} \left\| \boldsymbol{\beta} \right\|_1 \right] .
\end{equation}
Similarly, $S_c^\text{MSE} \left(\hat{\mathbf{y}}_c \left(\mathcal{X}_\mathcal{N}, \boldsymbol{\beta}_\mathcal{N}^{L_1} \right) \right)$ represents the financial losses of the central agent if the forecast is based on a lasso regression given all support agents' features.

\section{Regression Market with the Lasso} 
\label{sec:lasso}

In this section, we introduce the concept of the \emph{lag} in time series data as a way to define the number of relevant features from recent time steps from each agent. For wind power forecasting, the lag not only captures the temporal correlations of the wind generation at a specific site, it also indirectly captures the spatial correlations between neighboring sites as a result of the natural development of wind \cite{Girard2013}. Using the lasso term to define the payment, we then construct a regression market for wind agents, which is proved to meet the payment requirement of support agents while guaranteeing profit for the central agent. We note that only in-sample market outcomes are analyzed in this paper, meaning $T$ can also be interpreted as the total number of time steps for training the model.

\subsection{Linear Regression on Features with a Fixed Maximum Lag}
\label{ssec:weightedlasso}
Recall from Sections \ref{sec:setup} and \ref{sec:regframe} that $D_{i \rightarrow c}$ represents the number of relevant features from each agent. Here, since we are dealing with time series data, we equate $D_{i \rightarrow c}$ to agent $i$'s maximum lag, which is the number of recent time steps that are considered relevant for the target variable.  This means that each feature from a support agent represents their data of a certain lag to the target variable. Assuming features from all agents are available up to one time step before the target variable $y_{c,t}$ is revealed, and that a fixed maximum lag $\mathcal{D}$ applies to all agents, i.e., $D_{i \rightarrow c} = \mathcal{D}, \forall i \in \mathcal{N}$, the relevant features for $y_{c,t}$ from any agent $i$ can be gathered in the vector $\mathbf{x}_{i,t} \in \mathbb{R}^\mathcal{D}= [x_{i,t-\mathcal{D}} \ x_{i,t-\mathcal{D}+1} \ \ldots \ x_{i,t-1}]^\top, \forall i \in \mathcal{N}$. Applying the fixed maximum lag $\mathcal{D}$ to the regressions in Sections \ref{ssec:ols_reg} and \ref{ssec:lasso_reg},  we have $|\boldsymbol{\beta}_c^\ast| = 1 + \mathcal{D}$, and $|\boldsymbol{\beta}_\mathcal{N}^\ast| = 1 + N \mathcal{D}$.

The features of any agent $i$ can then be expressed in $\mathbf{X}_{i,T} \in \mathbb{R}^{T \times \mathcal{D}} := \left[\mathbf{x}_{i,1} \ \mathbf{x}_{i,2} \ \ldots \ \mathbf{x}_{i,T}\right]^\top$. To prepare for the regression, we gather the data from the central agent alone in $\mathbf{X}_{\{c\},T} \in \mathbb{R}^{T \times (1 + \mathcal{D})} := \left[\mathbf{1}_T \ \mathbf{X}_{c,T}\right]$ and data from all agents in $\mathbf{X}_{\mathcal{N},T} \in \mathbb{R}^{T \times (1 + N \mathcal{D})} := \left[\mathbf{1}_T \ \mathbf{X}_{1,T} \  \ldots \ \mathbf{X}_{N,T}\right]$, where $\mathbf{1}_T = [\underbrace{1 \ \ldots \ 1}_T]^\top$. Therefore, \eqref{eq:lscentral} and \eqref{eq:lasso_all} can be rewritten as 
\begin{equation}
\boldsymbol{\beta}_c^\ast  = \argmin_{\boldsymbol{\beta} \in \mathbb{R}^{1 + \mathcal{D}}} \left\| \mathbf{y}_c - \mathbf{X}_{\{c\},T} \boldsymbol{\beta} \right\|_2^2 ,
\label{eq:multi_t_cen_ols}
\end{equation}
and
\begin{equation}
\boldsymbol{\beta}_\mathcal{N}^{L_1} = \argmin_{\boldsymbol{\beta} \in \mathbb{R}^{1 + N \mathcal{D}}} {\left\{ \frac{1}{2} \left\| \mathbf{y}_c -  \mathbf{X}_{\mathcal{N},T} \boldsymbol{\beta} \right\|_2^2 + \lambda \left\| \boldsymbol{\beta} \right\|_1 \right\}} \ .
\label{eq:multi_t_all_lasso}
\end{equation}

For clarity, we expand the following matrix operation in \eqref{eq:multi_t_all_lasso} as
\begin{equation*}
\mathbf{y}_c - \mathbf{X}_{\mathcal{N},T} \boldsymbol{\beta} = 
\begin{bmatrix}
y_{c,1} \\
y_{c,2} \\
 \vdots \\
y_{c,T}
\end{bmatrix}
-
\begin{bmatrix}
1 & 1 & \cdots & 1 \\
x_{1,1-\mathcal{D}} &  x_{1,2-\mathcal{D}}  & \cdots & x_{1,T-\mathcal{D}} \\
x_{1,1-\mathcal{D}+1} &  x_{1,2-\mathcal{D}+1} & \cdots &  x_{1,T-\mathcal{D}+1} \\
\vdots & \vdots & \ddots & \vdots \\
x_{1,0} &  x_{1,1}& \cdots & x_{1,T-2} \\
\vdots & \vdots & \ddots & \vdots \\
x_{N,1-\mathcal{D}} &  x_{N,2-\mathcal{D}}  & \cdots & x_{N,T-\mathcal{D}} \\
x_{N,1-\mathcal{D}+1} &  x_{N,2-\mathcal{D}+1} & \cdots &  x_{N,T-\mathcal{D}+1} \\
\vdots & \vdots & \ddots & \vdots \\
x_{N,0} &  x_{N,1}& \cdots & x_{N,T-2} \\
\end{bmatrix}^\top
\begin{bmatrix}
\beta_\mathcal{N}^0 \\
\beta_1^1 \\
\beta_1^2 \\
\vdots \\
\beta_1^\mathcal{D} \\
\vdots \\
\beta_N^1 \\
\beta_N^2 \\
\vdots \\
\beta_N^\mathcal{D} \\
\end{bmatrix} .
\end{equation*}

\subsection{Regression Market with the Lasso Term as Payment}

Considering all $T$ time steps, we rewrite the average financial losses of the central agent as a function of the regressor matrix $\mathbf{X}$ and a given vector of linear coefficients $\boldsymbol{\beta}$ as
\begin{equation}
S_c^\text{MSE} \left(\mathbf{X} \in \mathbb{R}^{T \times |\boldsymbol{\beta}|}, \boldsymbol{\beta} \right) =  \frac{1}{T} \left\| \mathbf{y}_c - \mathbf{X} \boldsymbol{\beta} \right\|_2^2 \ ,
\label{eq:mse_mat}
\end{equation}
where $|\boldsymbol{\beta}|$ equals the number of features available for the regression including the intercept.

In the generic lasso estimator in \eqref{eq:multi_t_all_lasso}, a single $\lambda$ is applied to all the coefficients, whereas in practice we can assign different $\lambda$'s to different coefficients without compromising the computation efficiency.  Therefore we can construct a lasso regularization scalar matrix $ \boldsymbol{\lambda}_\mathcal{N}^\mathcal{D} \in \mathbb{R}^{1 + N \mathcal{D}}_{\geq 0} =  \text{diag}(\lambda_0 \ \lambda_1^1 \ \ldots \ \lambda_1^\mathcal{D} \ \lambda_2^1 \ \ldots \ \lambda_2^\mathcal{D} \ \ldots \ \lambda_N^1 \ \ldots \ \lambda_N^\mathcal{D})$,  where we set $\lambda_0 = 0$ to ensure no shrinkage is applied to the intercept term.  We then define the \emph{lasso loss} function as
\begin{align}
S_c^{L_1} \left(\mathbf{X}_{\mathcal{N},T}, \boldsymbol{\lambda}_\mathcal{N}^\mathcal{D}, \boldsymbol{\beta} \right) & = \frac{1}{T} \left\| \mathbf{y}_c -  \mathbf{X}_{\mathcal{N},T} \boldsymbol{\beta} \right\|_2^2 + \left\| \frac{2}{T} \boldsymbol{\lambda}_\mathcal{N}^\mathcal{D} \boldsymbol{\beta} \right\|_1 \nonumber \\
& = S_c^\text{MSE} \left(\mathbf{X}_{\mathcal{N},T}, \boldsymbol{\beta} \right) +  \left\| \frac{2}{T} \boldsymbol{\lambda}_\mathcal{N}^\mathcal{D} \boldsymbol{\beta} \right\|_1 .
\label{eq:lasso_loss} 
\end{align}
So \eqref{eq:multi_t_all_lasso} can be modified as 
\begin{equation}
\boldsymbol{\beta}_\mathcal{N}^{L_1} \left(\mathbf{X}_{\mathcal{N},T}, \boldsymbol{\lambda}_\mathcal{N}^\mathcal{D} \right) = \argmin_{\boldsymbol{\beta} \in \mathbb{R}^{1 + N \mathcal{D}}} S_c^{L_1} \left( \mathbf{X}_{\mathcal{N},T}, \boldsymbol{\lambda}_\mathcal{N}^\mathcal{D}, \boldsymbol{\beta} \right) .
\label{eq:reg_market_lasso}
\end{equation}

Here, we propose to use \eqref{eq:reg_market_lasso} as the basis for the regression market for wind agents.  Relying on their own data, the central agent has the baseline financial losses of $S_c^\text{MSE} \left(\mathbf{X}_{\{c\},T},  \boldsymbol{\beta}_c^\ast \right)$. When support agent $i \in \mathcal{N}_{-c}$ offers feature $x_{i, t}^d$ into the market, they also have the freedom to set $\lambda_i^d$ based on $u_i^d$, their reservation to sell  (Section \ref{subs:support_unwill}).  After the market operator conducts the market using \eqref{eq:reg_market_lasso}, ${\beta_i^d}^{L_1}$ is computed. If ${\beta_i^d}^{L_1} = 0$, feature $x_{i, t}^d$ is not selected to be used for the central agent's forecast, and no payment is needed. Otherwise the central agent has to pay agent $i$ in the amount of $\left| \frac{2}{T} \lambda_i^d {\beta_i^d}^{L_1} \right|$ for using feature $x_{i, t}^d$ for the forecast. Therefore, the lasso term $\left\| \frac{2}{T} \boldsymbol{\lambda}_\mathcal{N}^\mathcal{D} \boldsymbol{\beta} \right\|_1$ in \eqref{eq:lasso_loss} represents the central agent's total payment, and $S_c^{L_1} \left(\mathbf{X}_{\mathcal{N},T}, \boldsymbol{\lambda}_\mathcal{N}^\mathcal{D}, \boldsymbol{\beta} \right)$ represents the central agent's sum of financial losses and payments in the regression market.

\begin{proposition} \label{thrm:payment_guarantee}
If no shrinkage is applied to the central agent's own features ($\lambda_c^d=0, \forall d$),  and each support agent sets $\lambda_i^d = \frac{T}{2} u_i^d$, then the central agent's sum of financial losses and data payments in the regression market is no greater than the central agent's financial losses without the regression market.
\end{proposition}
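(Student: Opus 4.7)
The plan is to exploit the fact that the lasso optimization over the enlarged feature set $\mathbf{X}_{\mathcal{N},T}$ minimizes an objective whose infimum is at most its value at any particular feasible point. I will construct a specific feasible coefficient vector, call it $\tilde{\boldsymbol{\beta}}$, that lifts the central agent's OLS solution $\boldsymbol{\beta}_c^\ast$ from $\mathbb{R}^{1+\mathcal{D}}$ into $\mathbb{R}^{1+N\mathcal{D}}$ by padding with zeros in every slot corresponding to a support agent's feature. Concretely, I would place $\beta_c^{0\,\ast}$ in the intercept slot, the central agent's OLS slope coefficients ${\beta_c^{d\,\ast}}$ in the slots indexed by $c$, and set ${\tilde\beta_i^d}=0$ for every $i\in\mathcal{N}_{-c}$ and every $d$.

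Next, I would evaluate the two summands of the lasso loss $S_c^{L_1}$ at $\tilde{\boldsymbol{\beta}}$. For the MSE term, because the padded zeros kill every column of $\mathbf{X}_{\mathcal{N},T}$ associated with a support agent, the matrix-vector product $\mathbf{X}_{\mathcal{N},T}\tilde{\boldsymbol{\beta}}$ collapses to $\mathbf{X}_{\{c\},T}\boldsymbol{\beta}_c^\ast$, yielding $S_c^\text{MSE}(\mathbf{X}_{\mathcal{N},T},\tilde{\boldsymbol{\beta}})=S_c^\text{MSE}(\mathbf{X}_{\{c\},T},\boldsymbol{\beta}_c^\ast)$. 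For the penalty term, I would check each block of $\boldsymbol{\lambda}_\mathcal{N}^\mathcal{D}\tilde{\boldsymbol{\beta}}$ separately: the intercept contributes zero because $\lambda_0=0$; each central-agent entry contributes zero because, by the hypothesis of the proposition, $\lambda_c^d=0$ for every $d$; and each support-agent entry contributes zero because $\tilde\beta_i^d=0$. Hence $\bigl\|\tfrac{2}{T}\boldsymbol{\lambda}_\mathcal{N}^\mathcal{D}\tilde{\boldsymbol{\beta}}\bigr\|_1=0$, and therefore $S_c^{L_1}(\mathbf{X}_{\mathcal{N},T},\boldsymbol{\lambda}_\mathcal{N}^\mathcal{D},\tilde{\boldsymbol{\beta}})=S_c^\text{MSE}(\mathbf{X}_{\{c\},T},\boldsymbol{\beta}_c^\ast)$.

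The conclusion then follows immediately from the definition of $\boldsymbol{\beta}_\mathcal{N}^{L_1}$ in \eqref{eq:reg_market_lasso} as the argmin of $S_c^{L_1}$: the market outcome, which represents the central agent's total losses plus payments, satisfies
\begin{equation*}
S_c^{L_1}\bigl(\mathbf{X}_{\mathcal{N},T},\boldsymbol{\lambda}_\mathcal{N}^\mathcal{D},\boldsymbol{\beta}_\mathcal{N}^{L_1}\bigr)
\;\leq\; S_c^{L_1}\bigl(\mathbf{X}_{\mathcal{N},T},\boldsymbol{\lambda}_\mathcal{N}^\mathcal{D},\tilde{\boldsymbol{\beta}}\bigr)
\;=\; S_c^\text{MSE}\bigl(\mathbf{X}_{\{c\},T},\boldsymbol{\beta}_c^\ast\bigr),
\end{equation*}
which is exactly the desired inequality. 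The specific choice $\lambda_i^d=\tfrac{T}{2}u_i^d$ in the hypothesis is not actually used in this inequality itself, but it is needed to reconcile the payment formula $\bigl|\tfrac{2}{T}\lambda_i^d{\beta_i^d}^{L_1}\bigr|$ with the support agents' thresholds $H_i^d=|u_i^d\beta_i^d|$ so that the penalty term genuinely equals the sum of payments the central agent owes, and hence the left-hand side above is really ``losses plus payments'' as claimed.

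There is essentially no hard step: the argument is a one-line application of optimality once the right candidate is exhibited. The only place that requires care is the bookkeeping of indices, namely verifying that the zero-padded vector $\tilde{\boldsymbol{\beta}}$ sits inside $\mathbb{R}^{1+N\mathcal{D}}$ in precisely the positions that match the column ordering of $\mathbf{X}_{\mathcal{N},T}$, and that all three kinds of contributions to $\bigl\|\tfrac{2}{T}\boldsymbol{\lambda}_\mathcal{N}^\mathcal{D}\tilde{\boldsymbol{\beta}}\bigr\|_1$ (intercept, central-agent slopes, support-agent slopes) vanish for distinct reasons as listed above. Once that bookkeeping is clean, the proof concludes in a single inequality.
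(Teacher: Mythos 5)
Your proposal is correct and matches the paper's own proof essentially line for line: both construct the zero-padded lift of the central agent's OLS solution, observe that the penalty vanishes there (intercept, central-agent, and support-agent blocks each for their own reason), and invoke optimality of the lasso minimizer. Your remark that the choice $\lambda_i^d=\tfrac{T}{2}u_i^d$ enters only to identify the penalty term with the sum of payments is exactly how the paper uses it in its final step.
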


\begin{proof}
In order to avoid shrinkage of the central agent's own features, we construct scalar matrix $ \boldsymbol{\lambda}_\mathcal{N}^\mathcal{D} =  \text{diag}(0 \ \lambda_1^1 \ \ldots \ \lambda_1^\mathcal{D} \ \lambda_2^1 \ \ldots \ \lambda_2^\mathcal{D} \ \ldots \ \lambda_N^1 \ \ldots \ \lambda_N^\mathcal{D})$, while
\begin{equation}
\lambda_c^d = 0, \ d = 1, 2, \ldots, \mathcal{D} \ .
\label{eq:lambda_cen_0}
\end{equation}

Let us construct a vector of regression parameters $\hat{\boldsymbol{\beta}} \in \mathbb{R}^{1 + N \mathcal{D}} := [\hat{\beta}^0 \ {\hat{\boldsymbol{\beta}}_1}^\top \ \ldots \ {\hat{\boldsymbol{\beta}}_N}^\top]^\top$, where $\hat{\boldsymbol{\beta}}_i = [\hat{\beta}_i^1 \ \hat{\beta}_i^2 \ \ldots \ \hat{\beta}_i^\mathcal{D}]^\top, \forall i \in \mathcal{N}$. We set 
\begin{equation}
\hat{\boldsymbol{\beta}}_i = \boldsymbol{0}^\mathcal{D} = [\underbrace{0 \ \ldots \ 0}_\mathcal{D}]^\top, \forall i \neq c \ .
\label{eq:beta_sup_0}
\end{equation}
We further set
\begin{equation}
[\hat{\beta^0} \ {\hat{\boldsymbol{\beta}}_c}^\top]^\top = \boldsymbol{\beta}_c^\ast \stackrel{\eqref{eq:multi_t_cen_ols}}{=}  \argmin_{\boldsymbol{\beta} \in \mathbb{R}^{1 + \mathcal{D}}} \left\| \mathbf{y}_c - \mathbf{X}_{\{c\},T} \boldsymbol{\beta} \right\|_2^2 \ .
\label{eq:constr_beta_ols_1}
\end{equation}
Therefore, the central agent's average financial losses per time step using their own features are given by 
\begin{equation}
S_c^\text{MSE} \left(\mathbf{X}_{\{c\},T},  \boldsymbol{\beta}_c^\ast \right) =  \frac{1}{T} \left\| \mathbf{y}_c - \left[\mathbf{1}_T \ \mathbf{X}_{c,T}\right] [\hat{\beta^0} \ {\hat{\boldsymbol{\beta}}_c}^\top]^\top \right\|_2^2 \ .
\label{eq:constr_beta_ols_2}
\end{equation}

Applying $\hat{\boldsymbol{\beta}}$ to the lasso loss function of the central agent using all agents' features, we have
\begin{align}
& S_c^{L_1} \left( \mathbf{X}_{\mathcal{N},T}, \boldsymbol{\lambda}_\mathcal{N}^\mathcal{D}, \hat{\boldsymbol{\beta}} \right) \label{eq:lasso_loss_any} \\
\stackrel{\eqref{eq:lasso_loss}}{=} & \frac{1}{T} \left\| \mathbf{y}_c -  \mathbf{X}_{\mathcal{N},T} \hat{\boldsymbol{\beta}} \right\|_2^2 + \left\| \frac{2}{T} \boldsymbol{\lambda}_\mathcal{N}^\mathcal{D} \hat{\boldsymbol{\beta}} \right\|_1 \\
\stackrel{\eqref{eq:beta_sup_0}}{=} & \frac{1}{T} \left\| \mathbf{y}_c - \left[\mathbf{1}_T \ \mathbf{X}_{c,T}\right] [\hat{\beta^0} \ {\hat{\boldsymbol{\beta}}_c}^\top]^\top \right\|_2^2 + \left\| \frac{2}{T} \text{diag}(0 \ \lambda_c^1 \ \ldots \ \lambda_c^\mathcal{D}) [\hat{\beta^0} \ {\hat{\boldsymbol{\beta}}_c}^\top]^\top \right\|_1 \\
\stackrel{\eqref{eq:lambda_cen_0}}{=} & \frac{1}{T} \left\| \mathbf{y}_c - \left[\mathbf{1}_T \ \mathbf{X}_{c,T}\right] [\hat{\beta^0} \ {\hat{\boldsymbol{\beta}}_c}^\top]^\top \right\|_2^2 + 0 \\
\stackrel{\eqref{eq:constr_beta_ols_2}}{=} & S_c^\text{MSE} \left(\mathbf{X}_{\{c\},T},  \boldsymbol{\beta}_c^\ast \right) \label{eq:pr_mse_cen} \\
\stackrel{\eqref{eq:lasso_loss_any}}{\geq} & \min_{\boldsymbol{\beta} \in \mathbb{R}^{1 + N \mathcal{D}}} S_c^{L_1} \left( \mathbf{X}_{\mathcal{N},T}, \boldsymbol{\lambda}_\mathcal{N}^\mathcal{D}, \boldsymbol{\beta} \right) \\
\stackrel{\eqref{eq:reg_market_lasso}}{=} & S_c^{L_1} \left( \mathbf{X}_{\mathcal{N},T}, \boldsymbol{\lambda}_\mathcal{N}^\mathcal{D}, \boldsymbol{\beta}_\mathcal{N}^{L_1}  \right) \label{eq:pf_lasso_loss_all} \\
\stackrel{\eqref{eq:lasso_loss}}{=} & S_c^\text{MSE} \left(\mathbf{X}_{\mathcal{N},T}, \boldsymbol{\beta}_\mathcal{N}^{L_1} \right) +  \left\| \frac{2}{T} \boldsymbol{\lambda}_\mathcal{N}^\mathcal{D} \boldsymbol{\beta}_\mathcal{N}^{L_1} \right\|_1 \label{eq:min_lasso_loss_all} \\
= \ & S_c^\text{MSE} \left(\mathbf{X}_{\mathcal{N},T}, \boldsymbol{\beta}_\mathcal{N}^{L_1} \right) + \sum_{i\in \mathcal{N}_{- c}} \sum_{d=1}^{\mathcal{D}} {\left| u_i^d {\beta_i^d}^{L_1} \right|} ,\label{eq:min_lasso_loss_all_2}
\end{align}
where we obtain \eqref{eq:min_lasso_loss_all_2} from \eqref{eq:min_lasso_loss_all} by setting $\lambda_i^d = \frac{T}{2} u_i^d$.
Since \eqref{eq:min_lasso_loss_all_2} is the central agent's sum of financial losses and data payments in the regression market, and \eqref{eq:pr_mse_cen} is the central agent's financial losses without the regression market, we have thus proved Proposition \ref{thrm:payment_guarantee}.
\end{proof}

Note that with \eqref{eq:pr_mse_cen} $\geq$ \eqref{eq:min_lasso_loss_all_2}, we also prove that \eqref{eq:central_IR} is satisfied, which means the central agent is guaranteed to financially benefit from the regression market.

To further explain this, let us consider a support feature $x_{i, t}^d$ for estimating $y_{c,t}$. In order for the market operator \eqref{eq:reg_market_lasso} not to set the corresponding coefficient $\beta_i^d$ to zero, it has to contribute to a reduction in financial losses that is greater than or equal to the payment it incurs.  Using the assumption $\lambda_i^d = \frac{T}{2} u_i^d$ in Proposition \ref{thrm:payment_guarantee}, the payment to agent $i$ for feature $x_{i, t}^d$ is
\begin{equation}
\left| \frac{2}{T} \lambda_i^d {\beta_i^d}^{L_1} \right| = \left| u_i^d {\beta_i^d}^{L_1} \right| ,
\label{eq:payment_feature}
\end{equation}
meeting the payment requirement of each support agent $i$. In summary, the proposed regression market guarantees financial viability for both the support agents and the central agent. 

Observe that a direct relationship is drawn between the linear coefficients of support features and the payment, without the support features being standardized in the lasso regression. This means that a support feature's mean and variance can influence the linear coefficient,  thus the payment as well. Therefore, each support agent, given a desirable revenue threshold, needs to set their reservation to sell accordingly to account for the means and variances of their support features. The reservation to sell can be considered an independent choice of each support agent with the aim to achieve a certain revenue threshold rather than a metric to compare laterally support agents' willingness to disclose their data. An alternative way of designing the market is to require each support agent to submit standardized data and the corresponding reservation to sell. We choose the former to provide the possibility for support agents to encrypt both their data and their true reservation to sell before submitting features to the market \cite{Obst2017}.

The linear relationship between the payment and the absolute value of the coefficient also raises the question of whether other forms of the payment term (e.g., a quadratic term in $\beta$) could apply to the regression market. Albeit it not being the focus of the paper, the lasso regularizer can also enable the regression with support agents' data to be conducted in a distributed manner \cite{Messner2019}, thus providing another level of privacy protection.  Therefore, we apply the lasso payment in our proposed framework, with the aim to extend it to distributed learning in the future.

\section{Case Studies} \label{sec:cases}

To verify the performance of the proposed regression market, we design two case studies in this section. In the first case study, we construct synthetic datasets for all the agents with fixed linear correlations. This way we have the ground truth of the correlations between the agents' data to verify the model results. In the second case study, we use the hourly wind generation zonal data from Nord Pool to demonstrate the impact of the support agents' reservation to sell on the profit of the market participants.

\subsection{Regression Market Implemented on Synthetic Data}

In this case study, we use an autoregressive process to simulate the data of 4 independent market players (P2-P5) with first order autocorrelations, meaning that their data only have a linear correlation with their own data from the previous time step. Then we use a vector autoregressive process to simulate one market player (P1) that holds data with first order correlations with the other agents. Assuming each time step is one hour, the synthetic data are generated by
\begin{equation*}
\mathbf{z}_{t} = 
\begin{bmatrix}
0.08 & 0.18 & 0.16 & 0.14 & 0.12 \\
0 & 0.9 & 0 & 0 & 0 \\
0 & 0 & 0.8 & 0 & 0 \\
0 & 0 & 0 & 0.7 & 0 \\
0 & 0 & 0 & 0 & 0.6
\end{bmatrix}
\mathbf{z}_{t-1} + \boldsymbol{\varepsilon}_t  ,
\end{equation*}
where $\mathbf{z}_{t} = [z_{1, t} \ z_{2, t} \ \cdots \ z_{5,t}]^\top$ denote both the target and feature values of players (P1-P5), and $\mathbf{\varepsilon}_{t} = [\varepsilon_{1, t} \ \varepsilon_{2, t} \ \cdots \ \varepsilon_{5,t}]^\top$ represent the error terms: $\varepsilon_{i,t} \sim \mathcal{N}(0, 1)$.

\begin{figure} [t]
\begin{center}
\includegraphics[width=10cm]{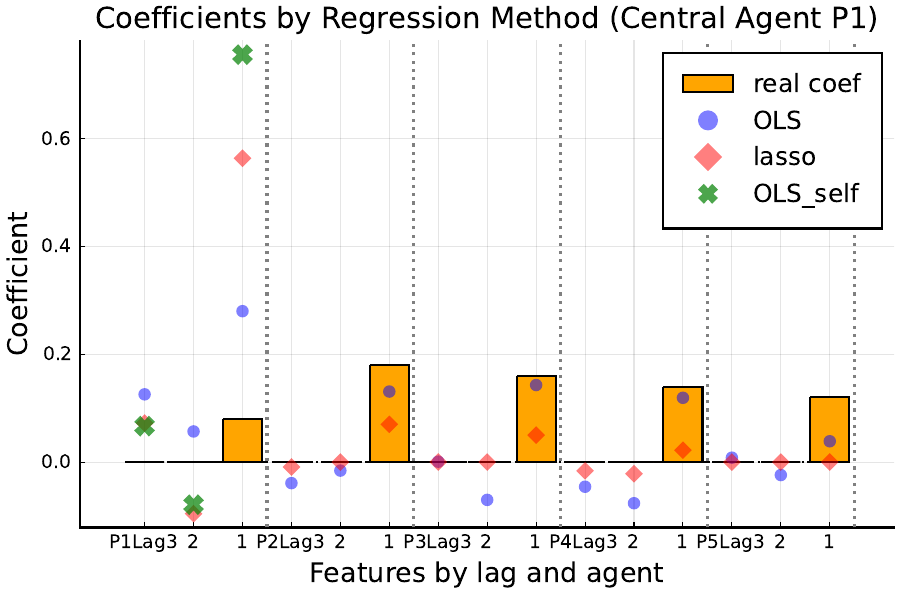}    
\caption{Comparison of regression coefficients by regression method of P1 that has a target variable highly correlated with support agents' data} 
\label{fig:coef_regr_method_cor}
\end{center}
\end{figure}

\begin{figure} [t]
\begin{center}
\includegraphics[width=10cm]{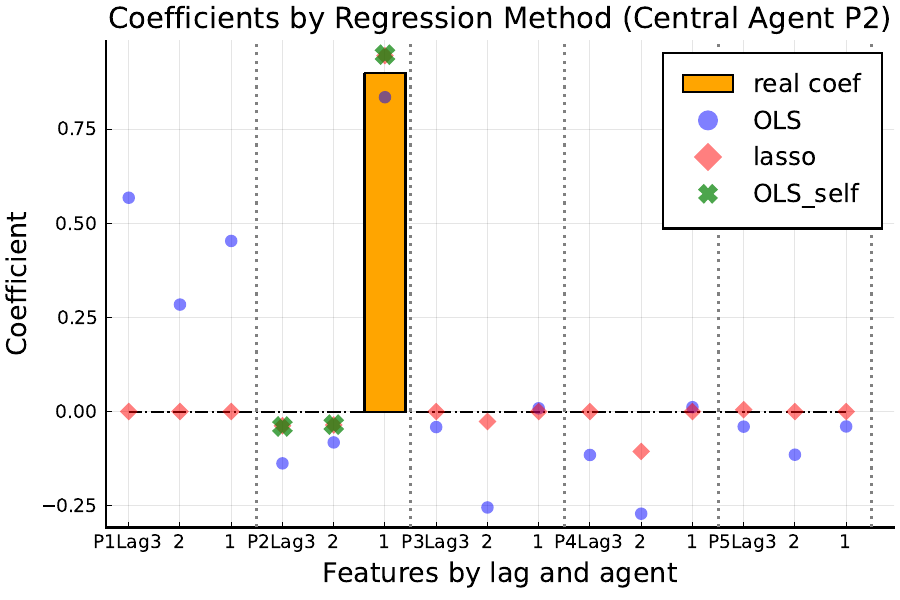}    
\caption{Comparison of regression coefficients by regression method of P2 that has a target variable uncorrelated with support agents' data} 
\label{fig:coef_regr_method_uncorr}
\end{center}
\end{figure}

Focusing on P1 and P2 as central agents, we let the market operator conduct their analytics tasks using different regression methods and obtain the estimated coefficients to compare with the known real coefficients. With a fixed maximum lag of 3 hours and a total training time of 10 days, the market operator returns the results shown in Fig. \ref{fig:coef_regr_method_cor} for P1 and Fig. \ref{fig:coef_regr_method_uncorr} for P2.  Three regression methods are compared: (a) OLS regression on the central agent's own data, (b) OLS regression with all agents' data, and (c) lasso regression with all agents' data. For P1, method (a) cannot capture the correlations with the support agents, and method (b) overfits the data by estimating non-zero coefficients on features that are not correlated with the central agent's data. Method (c) most successfully identifies the non-zero coefficients, albeit shrinking their values due to the lasso regularization.  For P2,  method (b) overfits the data again, while method (c) yields similar results as method (a), well capturing the independence as well as the first-order autocorrelation of P2's data. 

\begin{figure*}[t]
\begin{center}
\includegraphics[width=\textwidth]{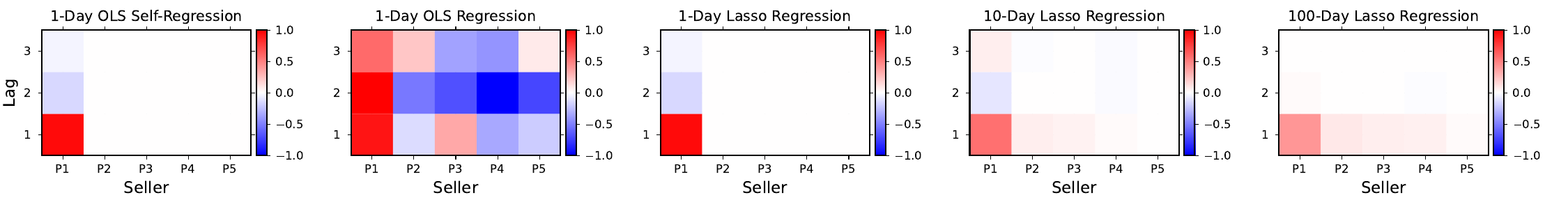}
\caption{Regression coefficients of P1 that has a target variable correlated with support agents' data, varying by the time scale of regression.}
\label{fig:beta_hmap_corr}
\end{center}
\end{figure*}

\begin{figure*}[t]
\begin{center}
\includegraphics[width=\textwidth]{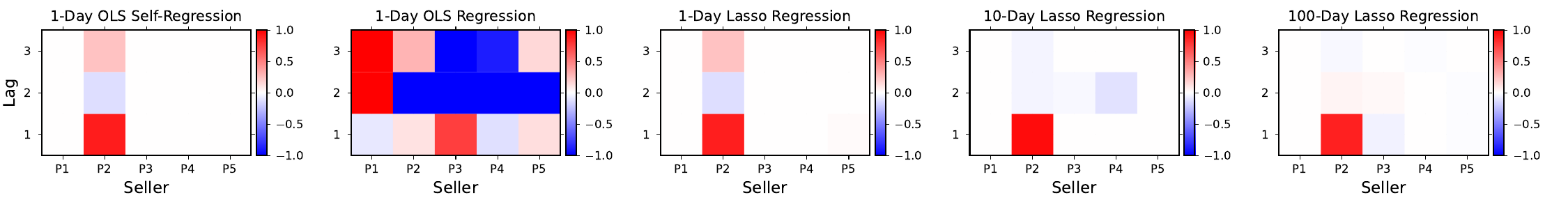}
\caption{Regression coefficients of P2 that has a target variable uncorrelated with support agents' data, varying by the time scale of regression.}
\label{fig:beta_hmap_uncorr}
\end{center}
\end{figure*}

To see how the results from the the regression change as the training time increases, Fig.  \ref{fig:beta_hmap_corr} and Fig.  \ref{fig:beta_hmap_uncorr} provide a visual representation of the estimated coefficients.  For P1, lasso regression demonstrates a clear advantage over OLS regression for reducing overfitting, and over OLS self-regression for being able to capture correlations with features from support agents. For P2, even though the lasso regression cannot provide much additional benefit to the central agent, it is still effective in identifying the agent's independence and reducing overfitting. 

\begin{figure} [t]
\begin{center}
\includegraphics[width=14cm]{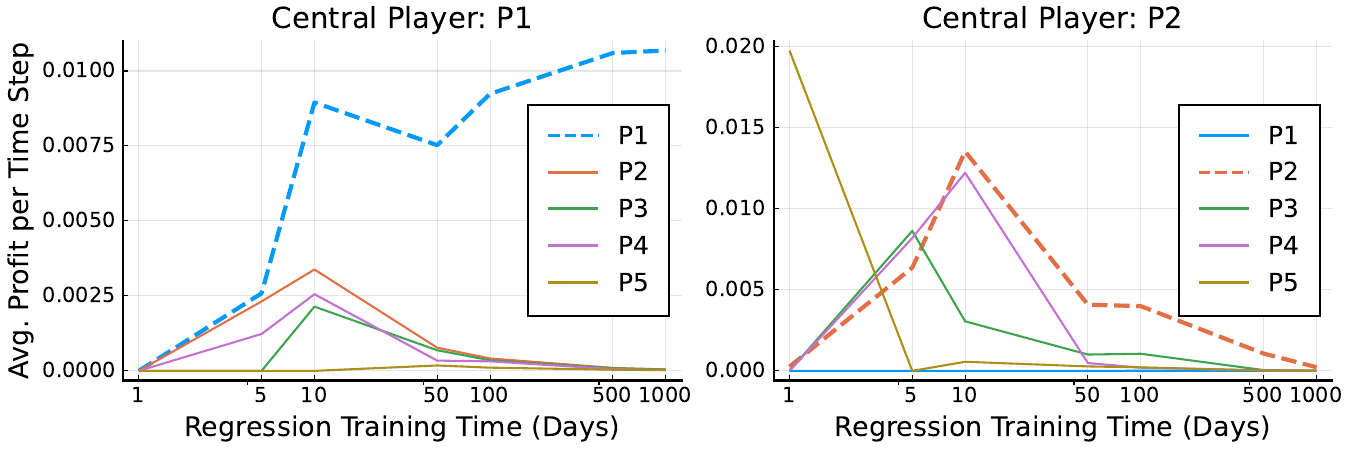}    
\caption{Average profit per time step of all players with P1 and P2 as the central agents, respectively.} 
\label{fig:prof_corr}
\end{center}
\end{figure}

Next, we evaluate the impact of the training time on the profits of the agents. Here, we reiterate that the analyses done in this paper are in-sample, meaning the central agent's profit only reflects the forecast improvement during the training period. As Fig. \ref{fig:prof_corr} shows, increasing the training time $T$ eventually reduces the support agents' average profit per time step to zero regardless of the correlations between the support agents and the central agent. This is because the payment term $\left| u_i^d {\beta_i^d}^{L_1} \right|$ from \eqref{eq:payment_feature} does not scale with $T$. In practice, in order to fulfill a certain revenue threshold per time step, the support agents can take $T$ into account when setting their reservation to sell (e.g., setting the $u$ value in proportion to the length of their offered support features).  Meanwhile, P1 and P2 as central agents have very different profit trajectories as $T$ increases. For P1, as their data are highly correlated with the support agents, more training time improves the estimates of the coefficients, hence increasing their profit. For P2, as their data are independent of others, the improvement of their analytics task is limited to having the lasso shrink the untrue coefficients of their own data, and as the training time increases, this minor improvement also reduces to zero. Note that in reality, an agent with data that are completely independent from others would not have the motivation to participate in a data market.

\subsection{Regression Market Implemented on Real Data}

To test our proposed regression market on real-world data, we obtain zonal wind power data for Denmark and Sweden from the open source Nord Pool data repository. As an illustrative example, these data are aggregated in six zones\footnote{https://www.nordpoolgroup.com/the-power-market/Bidding-areas/} (DK1, DK2, SE1, SE2, SE3, and SE4) to represent six players in the regression market. In practice, since the clearing of the regression market is decoupled from the energy market, a central agent with assets in multiple energy market zones can participate in the regression market by adjusting their loss function to truthfully reflect their financial situation in the energy market. The main purpose of this case study is to examine the impact of the support agents' reservation to sell on the final market outcomes. 

\begin{figure} [t]
\begin{center}
\includegraphics[width=10cm]{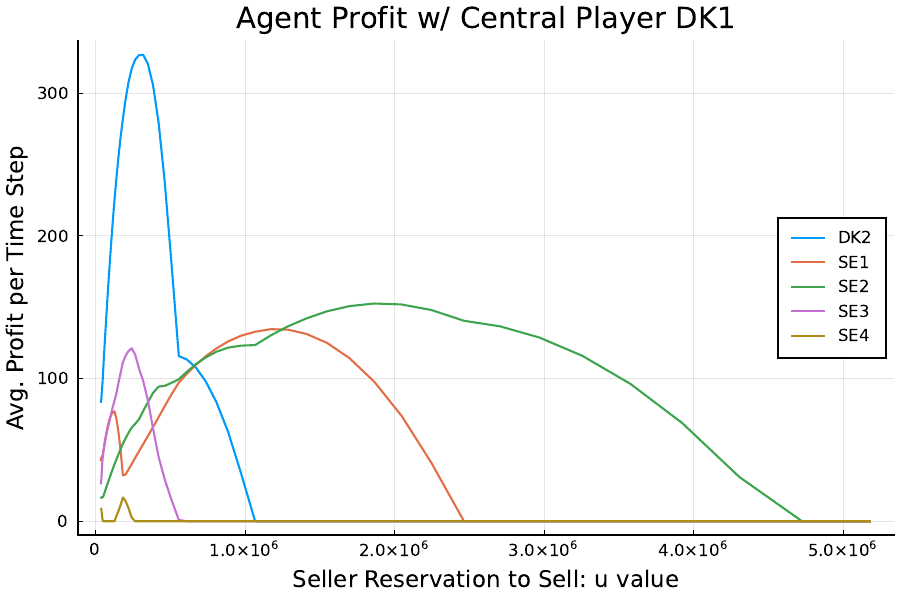}
\caption{Support agent profits with varying $u$ values (central agent: DK1, training time: 10 days).}
\label{fig:prof_lambda}
\end{center}
\end{figure}

Assigning DK1 as the central agent, and the others as support agents for the regression market, we use $1$ hour as the time step, $1$ hour ahead as the forecast horizon, $5$ hours as the maximum lag, and $10$ days ($240$ hours) as the training time. First, we assume all the support agents' reservation to sell ($u_i^d$) for all the features to be the same value $u$. Varying its value, we plot the payments for all the support agents in Fig. \ref{fig:prof_lambda}. In general, every agents' profit first increases with the $u$ value, then peaks, and then reduces to zero. To explain this, we recall the payment term $\left| u_i^d {\beta_i^d}^{L_1} \right|$. When $u_i^d = u = 0$, the payment is zero, but $\beta_i^d$ is at its peak due to a lack of shrinkage. As $u$ increases, the profit increases, but meanwhile more shrinkage is applied to $\beta_i^d$. A peak appears when the trade-off between the two opposite forces reaches a balance, but afterwards the lasso gradually shrinks $\beta_i^d$ to zero, and the profit becomes zero again.  The position of the peak may have to do with the correlation and the magnitude of the support agent's data. This is an interesting topic for future work.

\begin{figure} [t]
\begin{center}
\includegraphics[width=14cm]{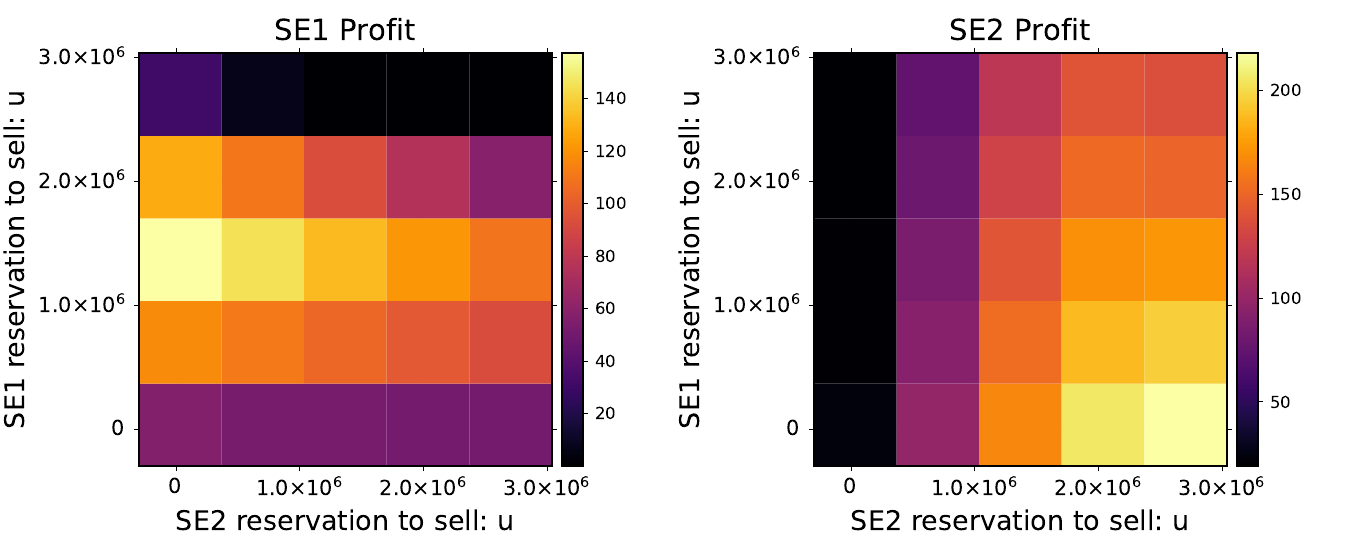}
\caption{SE1 and SE2 profits with varying $u$ values of SE1 and SE2 (central agent: DK1, training time: 10 days).}
\label{fig:prof_lambda_competition}
\end{center}
\end{figure}

Lastly, we select from Fig. \ref{fig:prof_lambda} the two agents whose profits peak the last to analyze the mutual impact of their reservation to sell on each other's profit. Here, we allow the reservation to sell for SE1 and SE2 to be different while fixing the other agents' $u$ value, and show their resulting profits in Fig.  \ref{fig:prof_lambda_competition}. It is observed that within the demonstrated range, SE2 benefits from higher $u$ values, while SE1 achieves the highest profit in the middle range. Meanwhile, each player's profit decreases as the other player's reservation to sell increases. So far, the reservation to sell has been used as a customized parameter by the support agents to ensure a revenue threshold, but the results from Fig. \ref{fig:prof_lambda_competition} raise the question of whether the support agents could instead strategically set the reservation to sell to maximize their gain in the regression market. Future research can examine the interplay of the reservation to sell of more than two players and the corresponding market equilibrium.

\section{Conclusion} \label{sec:conclusion}

Adopting the lasso regularizer, we construct a regression market for wind agents to trade wind power data to improve forecasting.  Each support agent as a data seller has the freedom to determine their reservation to sell each feature they own, which is incorporated in the lasso term of the central agent's analytics task under linear regression. The product of a support agent's reservation to sell a feature and the absolute value of the corresponding estimated coefficient is directly computed as the payment from the central agent for the sold feature.  This market framework is proved to meet the support agents' profit requirements while guaranteeing financial benefits for the central agent.  

Some immediate future work includes the incorporation of out-of-sample analyses, the strategies to set the reservation to sell from the data sellers' perspective for maximizing individual gains, and the extension to an online market. This regression market can be applied to other use cases, where the data sellers have individual revenue requirements on the data sold to the data buyer. It can also be readily extended to a multi-buyer framework since the analytics tasks of multiple agents can be conducted simultaneously and the outcome of each agent's task does not affect the task of another.

\section*{Acknowledgements} 

The research leading to this work is being carried out as a part of the Smart4RES project (European Union’s Horizon 2020, No. 864337). The
sole responsibility of this publication lies with the author. The European
Union is not responsible for any use that may be made of the information
contained therein.

The open source and easily accessible market data from Nord Pool have provided the basis for case studies in this work. 

In addition, the authors would like to acknowledge the fruitful discussions on data markets with Ricardo J. Bessa and Carla Gonçalves at INESC Porto, as well as the useful tips on autoregressive models provided by Amandine Pierrot at DTU.

\bibliographystyle{myIEEEtran}
\bibliography{PSCC_2022.bib}

\end{document}